\newtheorem{lm}{Lemma}[section]
\newtheorem{thm}[lm]{Theorem}
\newtheorem{df}[lm]{Definition}
\theoremstyle{remark}
\newtheorem{claim}{Claim}
\newcommand{\m}[1]{{\uppercase {\mathbf{#1}}}}
\newcommand{\rel}[1]{{\uppercase {\mathbb{#1}}}}
\DeclareMathOperator{\num}{num}
\DeclareMathOperator{\dep}{dep}
\newcommand{\ceqv}[1]{\ensuremath{\operatorname{\textsc{Ceqv}
                                \ifthenelse{\equal{#1}{}}{}{\!\left( {\m #1} \right)}}}}
\newcommand{\ceqvt}[1]{\ensuremath{\operatorname{\textsc{Ceqv}{}_{T}
                                \ifthenelse{\equal{#1}{}}{}{\!\left( {\m #1} \right)}}}}
\newcommand{\ceqvm}[1]{\ensuremath{\operatorname{\textsc{Ceqv}{}_{TM}
                                \ifthenelse{\equal{#1}{}}{}{\!\left( {\m #1} \right)}}}}
\newcommand{\csat}[1]{\ensuremath{\operatorname{\textsc{Csat}
                                \ifthenelse{\equal{#1}{}}{}{\!\left( {\m #1} \right)}}}}
\newcommand{\csatt}[1]{\ensuremath{\operatorname{\textsc{Csat}{}_{T}
                                \ifthenelse{\equal{#1}{}}{}{\!\left( {\m #1} \right)}}}}
\newcommand{\csatm}[1]{\ensuremath{\operatorname{\textsc{Csat}{}_{TM}
                                \ifthenelse{\equal{#1}{}}{}{\!\left( {\m #1} \right)}}}}
\newcommand{\Csat}[1]{\ensuremath{\operatorname{\textsc{Csat}
                                \ifthenelse{\equal{#1}{}}{}{\!\left( { #1} \right)}}}}
\newcommand{\mcsat}[1]{\ensuremath{\operatorname{\textsc{MCsat}
                                \ifthenelse{\equal{#1}{}}{}{\!\left( {\m #1} \right)}}}}
\newcommand{\scsat}[1]{\ensuremath{\operatorname{\textsc{SCsat}
                                \ifthenelse{\equal{#1}{}}{}{\!\left( {\m #1} \right)}}}}
\newcommand{\csp}[1]{\ensuremath{\operatorname{\textsc{CSP}
                                \ifthenelse{\equal{#1}{}}{}{\!\left( {\rel #1} \right)}}}}
\newcommand{\polsat}[1]{\ensuremath{\operatorname{\textsc{PolSat}
                                \ifthenelse{\equal{#1}{}}{}{\!\left( {\m #1} \right)}}}}
\newcommand{\poleq}[1]{\ensuremath{\operatorname{\textsc{PolEqv}
                                \ifthenelse{\equal{#1}{}}{}{\!\left( {\m #1} \right)}}}}
\newcommand{\npc}{\textsf{NP}-complete\xspace}
\newcommand{\conpc}{\textsf{co-NP}-complete\xspace}
\newcommand{\ptime}{\textsf{P}\xspace}
\newcommand{\Z}{\mathbb{Z}}
\newcommand{\pro}[2]{\ensuremath{\Pi_{#1}
                                \ifthenelse{\equal{#1}{}}{}{\!\left( { #2} \right)}}}
\newcommand{\set}[1]{{\left\{ {#1} \right\} }}
\renewcommand{\leq}{\leqslant}
\renewcommand{\geq}{\geqslant}
\renewcommand{\le}[1]{\leqslant_{#1}}
\renewcommand{\ge}[1]{\geqslant_{#1}}
\newcommand{\Usum}{\sum}
\newcommand{\Lsum}{\sum}
\newcommand{\Uplus}{+}
\newcommand{\Uminus}{-}
\newcommand{\Lplus}{+}
\newcommand{\pol}[1]{{\rm Pol\:\m #1}}
\newcommand{\Pol}[1]{{\rm Pol( #1)}}
\newcommand{\poln}[2]{{\rm Pol}_{#1}\m #2}
\newcommand{\po}[1]{{\mathbf {#1}}}
\renewcommand{\o}[1]{\overline {#1}}
\newcounter{ttable}
\newcommand{\map}{\longrightarrow}
\newcommand{\h}[1]{\widehat{#1}}
\keywords{circuit equivalence, identity checking, nilpotent algebra, structure theory}
\begin{document}
\title{Circuit equivalence in 2-nilpotent algebras}

\author{Piotr Kawa\l{}ek, Michael Kompatscher, Jacek Krzaczkowski}

\address{Piotr Kawa\l{}ek\\ Jagiellonian University\\  Faculty of Mathematics and Computer Science\\Department of Theoretical Computer Science\\
ul. Prof. S. \L{}ojasiewicza 6\\ 30-348, Krak\'ow, Poland}
\email{piotr.kawalek@doctoral.uj.edu.pl}

\address{Michael Kompatscher, Charles University Prague\\ MFF\\ Department of Algebra\\Sokolovka 83\\ 186 75 Praha 8\\ Czech Republic}
\email{ michael@logic.at}

\address{Jacek Krzaczkowski\\ Maria Curie-Sk\l{}odowska University\\ Faculty of Mathematics, Physics and Computer Science\\Department of  Computer Science\\
ul.Akademicka 9\\ 20-033, Lublin, Poland}
\address{Jagiellonian University\\  Faculty of Mathematics and Computer Science\\Department of Theoretical Computer Science\\
ul. Prof. S. \L{}ojasiewicza 6\\ 30-348, Krak\'ow, Poland}
\email{krzacz@poczta.umcs.lublin.pl}

\thanks{The first and the third authors are partially supported by Polish NCN Grant \#~2014/14/A/ST6/00138\\
The second author is supported by Charles University Research Centre programs PRIMUS/SCI/12 and UNCE/SCI/022 as well as grant 18-20123S of the Czech Grant Agency (GA\v{C}R)}

\maketitle

\begin{abstract}
The circuit equivalence problem of a finite algebra $\m A$ is the computational problem of deciding whether two circuits over $\m A$ define the same function or not. This problem not just generalises the equivalence problem for Boolean circuits, but is also of high interest in universal algebra, as it models the problems of checking identities in $\m A$. In this paper we discuss the complexity for algebras from congruence modular varieties. A partial classification was already given in \cite{IK}, leaving essentially only a gap for nilpotent but not supernilpotent algebras. We start a systematic study of this open case, proving that the circuit equivalence problem is in \ptime for $2$-nilpotent such algebras.
\end{abstract}

\section{Introduction}

To solve equations is one of the oldest and best-known problems in mathematics. For many centuries it inspired research in algebra and lead both to the development of new theoretical concepts and new algorithms (let us only mention Galois theory, Diophantine Equations and Gaussian elimination). From a computer science point of view the main focus lies to the latter and the question: What is the computational complexity of solving equations in a given algebra $\m A$?

More formally, by the equation satisfiability problem \polsat{A} of a fixed algebra $\m A$ we denote the computational problem of deciding whether a given equation of polynomials over $\m A$ has a solution or not. A prominent example of such a problem is \polsat{{}\mathbb Z,+,\cdot}, the problem of deciding whether a Diophantine equation has a solution, which was proven to be undecidable by Matiyasevich \cite{matiyasevich:10th}. 

The equivalence problem \poleq{A} is the closely related problem, where the input consists of two polynomials over $\m A$, and the task is to decide whether they define the same function. In other words the task is to check if an equation holds for \emph{all} possible assignments of values to the variables. For finite algebras \polsat{} clearly is in $\textsf{NP}$ and \poleq{} in $\textsf{co}$-$\textsf{NP}$; in the last twenty years there were numerous papers further investigating the complexity and trying to find hardness and tractability criteria for both problems (e.g. \cite{aichinger-systems}, \cite{aichinger-mudrinski}, \cite{burris-lawrence:groups}, \cite{goldmann-russell}, \cite{Horvath-EqSolvNilp}, \cite{horvath-szabo:groups}, \cite{horvath-szabo:polsatstar}, \cite{klima:monoids}, \cite{schwarz}).

One of the major obstacles in studying \polsat{A} and \poleq{A} systematically for all finite algebras is that the complexity strongly depends on the signature of $\m A$. For example, $A_4$ and some other solvable, non-nilpotent groups are known to induce problems \polsat{} and \poleq{} that are in P; however after adding the commutator $[x,y]=x^{-1}y^{-1}xy$ as a basic operation we obtain \npc \polsat{} problems and \conpc \poleq{} problems \cite{horvath-szabo:a4} \cite{komp-group}. Roughly speaking this results from the fact that some operations can be written in a much more concise ways using commutators than just the group operations alone. In fact, the terms used in proving NP-completeness inflate to exponentially longer expressions in the pure group language.

To resolve this problem, it was recently proposed to encode an input equation by circuits \cite{IK}. This approach prevents an artificial inflation of the input as in the above example. Consequently the complexity for these `circuit problems' only depends on the set of polynomial operations of the algebra, allowing for the use of universal algebra in studying their complexity. 
We formally define the circuit satisfiability (\csat{}) and circuit equivalence (\ceqv{}) as follows:
\begin{itemize}
 \item \csat{A}\\
 given a circuit over the algebra $\m A$ with two output gates $g_1$, $g_2$ is there a valuation of input gates $\overline{x}= (x_1,\ldots,x_n)$ that gives the same output on both $g_1$ and $g_2$, i.e. $g_1(\overline{x}) =g_2(\overline{x})$?
 \item \ceqv{A}\\
 given a circuit over the algebra $\m A$ is it true that for all inputs $\overline{x}$ we have the same values on given two output gates $g_1$, $g_2$, i.e. $g_1(\overline{x}) =g_2(\overline{x})$ for all $\overline{x}\in A^n$?
\end{itemize}
Besides \cite{IK} these problems were also considered in \cite{IKK} and \cite{aichinger-systems} (and implicitly already earlier, e.g. in \cite{horvath-szabo:polsatstar}). In \cite{IK} Idziak and the third author set the goal to classify the computational complexity of \csat{} and \ceqv{} for algebras from congruence modular varieties. On one hand these algebras form a quite broad class with many elements of interest in classical algebra such as groups, quasigroups, rings, modules, fields, lattices, Boolean algebras. On the other hand there is well-developed theory of commutators in this case, which will be the basis of our proof.

There are strong indications that the complexity hierarchy of \ceqv{} in the congruence modular case corresponds to a structural hierarchy in commutator theory: By \cite{IK}, for every non-nilpotent algebra $\m A$ from a congruence modular variety there exists a quotient algebra $\m A'$ of $\m A$ such that \ceqv{A'} is \conpc. On the other hand it was shown in \cite{aichinger-mudrinski} that \ceqv{} for so called supernilpotent algebras from  congruence modular varieties is in \ptime. 

We remark that in congruence modular varieties supernilpotent algebras are strictly contained in nilpotent algebras (but it is not true in general, see \cite{moore-moorhead}). This leaves a gap for nilpotent, but not supernilpotent algebras. In \cite{IKK} an example of a $2$-nilpotent, but not supernilpotent algebra $\m A$ was given for which \ceqv{A} can be solved in polynomial time.

This paper is the first step in the systematic study of \ceqv{} for all nilpotent algebras. We prove that \ceqv{A}  is in \ptime for every $2$-nilpotent algebra $\m A$ from a congruence modular variety. Our algorithm is based on the analysis of a normal form of polynomial operations of such algebras. Thus it comes hand in hand with a deeper understanding of the structure of $2$-nilpotent algebras. Our hope is to generalise these results to $k$-nilpotent algebras in future research.

\section{Definitions and notation}
We are going to use standard notation from universal algebra, which can be found in \cite{fm}. We define a signature $F$ to be a sequence $(f_i, k_i)_{i \in I}$, where each $f_i$ is a function symbol and $k_i$ is the arity corresponding to this symbol. An algebra over signature $F$ is then a tuple $\m A= (A, (f^{\m A}_i)_{i \in I})$ for some set $A$ and $f^{\m A}_i$ being a function from  $A^{k_i}$ to $A$. Each $f^{\m A}_i$ will be called a \textit{basic operation} of $\m A$. A \emph{finite algebra} is an algebra with finite universe $A$ and finite signature, so it has finitely many basic operations. An algebra $\m B$ is a \emph{subalgebra} of $\m A$ iff $B \subseteq A$, $B$ is closed under all basic operations of $\m A$, and the basic operations of $\m B$ are the~basic operations of $\m A$ restricted to the set $B$. In this case we write $\m B \leq \m A$.

For an algebra $\m A$, let us denote by the \emph{clone of polynomials} $\pol{A}$ the smallest set of operations on $A$ that contains all constant functions, all projections $\pi_{i}^n(x_1, \ldots, x_n) = x_i$, all basic operations of $\m A$ and that is closed under composition. Moreover let $\poln{n}{A}$ be the set of $n$-ary functions in $\pol{A}$. It is straightforward to see that for finite algebras $\ceqv{A}$ reduces to $\ceqv{B}$ if $\pol{A} \subseteq \pol{B}$ (see \cite{IK}). We will say that $\m B$ and $\m A$ are \emph{polynomially equivalent} iff there exist an algebra $\m B'$ isomorphic to $\m B$ with $\pol{A} = \pol{B'}$.

An \emph{affine} algebra is an algebra that is polynomially equivalent to a module. A \emph{Maltsev operation} is ternary operation $d(x,y,z)$ such that $d(x, x, y) = y$ and $d(x, y, y) = x$ holds for all $x,y$. For instance, every affine algebra has $x-y+z$ as a Maltsev operation.

We will use lowercase overlined letter $\overline{x}$ to denote tuples $\overline{x}=(x_1,\ldots,x_n) \in U^n$. In our paper $U$ will often stand for a direct product $\mathbb{Z}_{p_1^{k_1}}\times\ldots\times\mathbb{Z}_{p_m^{k_m}}$. In this case, for every $i = 1,\ldots,n$ we further use the notation $x_i=(x_i^{(1)},\ldots,x_i^{(m)})$, with $x_i^{(j)} \in \mathbb{Z}_{p_j^{k_j}}$. In particular, if we just want to study the $\mathbb{Z}_{p_j^{k_j}}$-component of a tuple $\overline{x} \in U^n$ we will use the notion $x^{(j)}=(x_1^{(j)},\ldots,x_n^{(j)})$.

\section{The structure of 2-nilpotent algebras}
In this section we provide some structural background on 2-nilpotent algebras and prove that (in some of them) we can represent polynomials in a certain normal form.

Nilpotent algebras can be defined using the commutator of congruences, generalising the notion of nilpotent groups and rings. We are however not going to give the original definition here and refer to the book \cite{fm} for background. For our purposes it will be enough to give a characterisation of nilpotent algebras in congruence modular varieties. In this case commutator theory works especially well and allows us to obtain much structural information about algebras. It is for instance well known that Abelian (or '1-nilpotent') algebras exactly correspond to affine algebras.

Now 2-nilpotent algebras from congruence modular varieties can be considered as the action of one affine algebra on an other one (see Chapter VII of \cite{fm}). More precisely, for two algebras $\m U$ and $\m L$ of the same signature $F$ such that
\begin{itemize}
 \item $\m U$ is polynomially equivalent to a module $(U;\Uplus)$ over a ring $\m R_U$, and
 \item $\m L$ is polynomially equivalent to a module $(L;\Lplus)$ over a ring $\m R_L$,
\end{itemize}
and a set $\widehat{F}$ of functions such that for every $f\in F$, say $k$-ary, there is $\widehat{F}\ni\widehat{f}\colon U^k\map L$ we define $\m L \otimes^{\widehat{F}} \m U$ as an algebra over signature $F$ and universe $L\times U$ by
\begin{equation}
\label{eq-flu}
f^{\m L \otimes^{\widehat{F}}\m U}((l_1,u_1),\ldots,(l_k,u_k))
=
(f^{\m L}(l_1,\ldots,l_k)+\h f(u_1,\ldots,u_k),f^{\m U}(u_1,\ldots,u_k)).
\end{equation}
It is shown in \cite{fm} that every 2-nilpotent algebra over signature $F$ from a congruence-modular variety is isomorphic to some $\m L \otimes^{\widehat{F}} \m U$. Working in such $\m L \otimes^{\widehat{F}} \m U$ we are going to show that every polynomial (or circuit) of it can be expressed in a certain normal form which will be extensively used by our polynomial time algorithm.

First of all observe that not only basic operations of $\m L \otimes^{\widehat{F}} \m U$, but all its polynomial operations, can be expressed in form \eqref{eq-flu}. Moreover, since $\m L$ and $\m U$ are affine, for a polynomial operation $\po p$ over $\m L \otimes^{\widehat{F}} \m U$ there exist $\lambda_i$, $\alpha_i$, $u_0$ such that

\[
\po p^{\m L \otimes^{\widehat{F}}\m U}((l_1,u_1),\ldots,(l_k,u_k))
=
\left(\Lsum_{i=1}^k \lambda_i l_i \Lplus \widehat{\po p}(u_1,\ldots,u_k),
\Usum_{i=1}^k \alpha_i u_i+u_0 \right).
\]
Let $(U,\Uplus) \cong \mathbb{Z}_{p_1^{k_1}}\times\dots\times\mathbb{Z}_{p_m^{k_m}}$ be the underlying group of $\m U$. We will prove that if $\m U$ and $\m L$ are of coprime order then $\widehat{\po p}(u_1,\ldots,u_k)$ can be presented as a sum of expressions in the form 
\[
\mu\cdot \po w^a_{1,\ldots,1}\left(\Usum_{i=1}^k\beta^{(1)}_i u_i \Uplus u^{(1)}_0 ,\ldots,
              \Usum_{i=1}^k\beta^{(s)}_i u_i \Uplus u^{(s)}_0\right),
\]
where  $\po w_{n_1,\ldots,n_m}^a(\overline{x})$ is a function from
$(\Z_{p_1^{k_1}})^{n_1}\times \ldots \times (\Z_{p_m^{k_m}})^{n_m} \rightarrow L$, $a \in L$ and
$$
\po  w^a_{n_1,\ldots,n_m}(\overline{x}) = \left\{ \begin{array}{ll}
a & \textrm{if $x_j^{(i)} = 0$ for all $i= 1,\ldots,m$ and $j=1,\ldots,n_i$},\\
0 & \textrm{otherwise}.\\
\end{array} \right.
$$

For short we will write $\po w^a$ for $\po w_{1,\ldots,1}^a$. Notice that $\po w^a$ is a function from $U \rightarrow L$ and $\po w_{n,\ldots,n}^a(\overline{x})$ can be interpreted as operation $U^n \rightarrow L$.
For  $\beta,\o x \in (\mathbb{Z}_{p_1^{k_1}} \times \mathbb{Z}_{p_2^{k_2}} \times \ldots \times \mathbb{Z}_{p_m^{k_m}})^n$ we will use the following notation:
\[
\beta \odot \overline{x} = (\Usum_{i=1}^n \beta_i^{(1)} x_i^{(1)}, \Usum_{i=1}^n \beta_i^{(2)} x_i^{(2)}, \ldots, \Usum_{i=1}^n \beta_i^{(m)} x_i^{(m)}).
\]

If $|L|$ and $|U|$ are co-prime we can express $\widehat{\po p}$ in a normal form, just using $\po w^a$:

\begin{lm}
\label{normal-lemma}
Let $\m U$, $\m L$ be modules such that $(U,\Uplus)$ is isomorphic to $\mathbb{Z}_{p_1^{k_1}} \times \mathbb{Z}_{p_2^{k_2}} \times \ldots \times \mathbb{Z}_{p_m^{k_m}}$ and $|U|$ and $|L|$ are coprime. Then every function $f\colon U^n\map L$ can be expressed in the form:
\begin{equation}\label{eq-normal}
f(x_1,\ldots,x_n)=\Lsum_{l\in L, c\in U\atop \beta\in U^n}\mu^{l}_{\beta,c} \po w^{l}(\beta \odot \o x \Uplus c).
\end{equation}
\end{lm}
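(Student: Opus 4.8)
The plan is to reduce the normal form to its simplest instances, the ``point masses'' on $U^n$, and to produce those by a character computation over finite fields; the coprimality hypothesis enters only to keep the denominators appearing in that computation coprime to $|L|$.

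\emph{Reduction.} For $\o a\in U^n$ and $l\in L$ write $l\cdot\delta_{\o a}\colon U^n\to L$ for the function with value $l$ at $\o a$ and $0$ elsewhere. Every $f\colon U^n\to L$ equals the pointwise sum $\sum_{\o a\in U^n}f(\o a)\cdot\delta_{\o a}$, and functions of the form \eqref{eq-normal} are closed under finite sums, so it suffices to write each $l\cdot\delta_{\o a}$ in the form \eqref{eq-normal}. Since $\o y\mapsto\beta\odot\o y$ is additive, the substitution $\o x\mapsto\o x\Uminus\o a$ carries a summand $\mu\,\po w^{l}(\beta\odot\o x\Uplus c)$ to $\mu\,\po w^{l}(\beta\odot\o x\Uplus(c\Uminus\beta\odot\o a))$, again of the allowed shape; hence it is enough to represent $l\cdot\delta_{\o 0}$ for each $l\in L$. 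Finally, $\po w^{l}(\beta\odot\o x\Uplus c)$ equals $l$ on $\{\o x:\beta\odot\o x=\Uminus c\}$ and $0$ elsewhere; writing $g_{\beta,c}\colon U^n\to\{0,1\}\subseteq\Z$ for the indicator of that set we have $\po w^{l}(\beta\odot\o x\Uplus c)=l\cdot g_{\beta,c}(\o x)$. Using $U^n=\prod_{j=1}^m(\Z_{p_j^{k_j}})^n$ one checks directly that the functions $g_{\beta,c}$ are exactly the products $\prod_{j=1}^m\mathbf{1}[\ell_j(x^{(j)})=a_j]$, as the $\ell_j$ range over linear functionals on $(\Z_{p_j^{k_j}})^n$ and the $a_j$ over $\Z_{p_j^{k_j}}$.

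\emph{The main step.} Let $G\subseteq\Z^{U^n}$ be the subgroup generated by all the $g_{\beta,c}$. I claim $\Z^{U^n}/G$ is finite and its order is divisible only by $p_1,\dots,p_m$. Granting this, pick an integer $D$ with $D\cdot\Z^{U^n}\subseteq G$ all of whose prime divisors divide $|U|$; then $D\cdot\delta_{\o 0}=\sum_{\beta,c}\mu_{\beta,c}\,g_{\beta,c}$ for some $\mu_{\beta,c}\in\Z$. As $|U|$ and $|L|$ are coprime, $\gcd(D,|L|)=1$, so (using $|L|\cdot L=0$) multiplication by $D$ is an automorphism of $(L,\Lplus)$; let $l'\in L$ be the unique element with $D\cdot l'=l$. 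Scaling the previous identity by $l'$ yields $l\cdot\delta_{\o 0}=\sum_{\beta,c}\mu_{\beta,c}\,\po w^{l'}(\beta\odot\o x\Uplus c)$, which has the form \eqref{eq-normal}; together with the reduction this proves the lemma.

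\emph{Proof of the claim.} Since $\Z^{U^n}/G$ is finitely generated, it is enough to show that the $g_{\beta,c}$ span $\mathbb{F}_q^{U^n}$ (equivalently $G/qG=\mathbb{F}_q^{U^n}$) for every prime $q\nmid|U|$. Fix such a $q$ and work over $\overline{\mathbb{F}_q}$. Because $q\nmid|U^n|$, the group $U^n$ has exactly $|U^n|$ characters $\chi\colon U^n\to\overline{\mathbb{F}_q}^{\times}$, and distinct characters are linearly independent, so they form a basis of $\overline{\mathbb{F}_q}^{U^n}$. Since $q\nmid p_j^{k_j}$, the $p_j^{k_j}$-th roots of unity in $\overline{\mathbb{F}_q}$ form a cyclic group of order $p_j^{k_j}$; fixing a generator $\theta_j$ we may write every character as $\chi(\o x)=\prod_{j=1}^m\theta_j^{\ell_j(x^{(j)})}$ for suitable linear functionals $\ell_j$. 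Expanding each factor through $\theta_j^{\ell_j(x^{(j)})}=\sum_{a\in\Z_{p_j^{k_j}}}\theta_j^{a}\,\mathbf{1}[\ell_j(x^{(j)})=a]$ and multiplying out expresses $\chi$ as an $\overline{\mathbb{F}_q}$-linear combination of the $g_{\beta,c}$ ($\beta$ assembled from the coefficients of the $\ell_j$, $c$ from the $a_j$). Hence the $g_{\beta,c}$ span $\overline{\mathbb{F}_q}^{U^n}$, and therefore $\mathbb{F}_q^{U^n}$, since they are $\Z$-valued and the rank of their matrix is insensitive to field extension. The real subtlety is not that $\delta_{\o 0}$ lies in the $\mathbb{Q}$-span of the coset indicators --- a routine character computation --- but that the denominators can be confined to the primes dividing $|U|$, so that one may safely divide by $D$ inside $L$; this is exactly why the claim is proved via reductions modulo the primes $q\nmid|U|$ rather than over $\mathbb{Q}$.
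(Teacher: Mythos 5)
Your proof is correct, but after the (shared) first step --- decomposing $f$ into shifted point masses $l\cdot\delta_{\o a}$ and translating to reduce to $l\cdot\delta_{\o 0}$ --- it diverges completely from the paper's argument. The paper proceeds constructively: it builds $\po w^l_{n+1}$ out of $\po w^l_n$ by explicit telescoping identities (equation \eqref{eq-normal-k=1} for exponent $1$, and the auxiliary sums $\po t^l_{n+1}$, $\po r^l_{n+1}$ for the induction on the exponents $k_j$), using coprimality only to invert the scalars $p$ and $p^{k-1}$ in $\m R_L$. You instead observe that the summands of \eqref{eq-normal} are scalar multiples of the $\{0,1\}$-valued coset indicators $g_{\beta,c}$, and prove that these generate a subgroup $G\leq\Z^{U^n}$ of finite index divisible only by the primes $p_1,\dots,p_m$; the key point, that the $g_{\beta,c}$ span $\mathbb{F}_q^{U^n}$ for every prime $q\nmid|U|$, is handled by linear independence of the $|U|^n$ characters $U^n\to\overline{\mathbb{F}_q}^{\times}$ together with the expansion $\theta^{\ell(x)}=\sum_a\theta^a\mathbf{1}[\ell(x)=a]$, and coprimality enters only at the very end to divide by the index $D$ inside $L$. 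Both arguments are sound. Yours is shorter, isolates cleanly the single place where $\gcd(|U|,|L|)=1$ is used, and avoids the somewhat delicate bookkeeping with $\po t^l_{n+1}$ and element orders; the price is that it is non-constructive --- it certifies that integer coefficients $\mu_{\beta,c}$ with $D\delta_{\o 0}=\sum\mu_{\beta,c}g_{\beta,c}$ exist without exhibiting them --- whereas the paper's recursion produces the coefficients explicitly. For the lemma as stated (a pure existence claim) this is immaterial, and even for the later algorithm it would suffice, since the normal forms of the finitely many basic operations can be precomputed for the fixed algebra and only compositions need to be computed on-line.
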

\begin{proof}
For $l\in L$ we set $s_{l}(x_1,\ldots,x_n) = \po w^l_{n,\ldots,n}(x_1^{(1)},\ldots,x_n^{(1)},\ldots,x_1^{(m)},\ldots,x_n^{(m)})$.
Observe that $s_{l}(x_1,\ldots,x_n)=l$ if $x_1=\ldots=x_n=0$ and $s_{l}(x_1,\ldots,x_n)=0$ otherwise. Then clearly every function $f(x_1,\ldots,x_n)$ can be written as the sum of all expressions $s_{f(u_1,\ldots,u_n)}(x_1\Uminus u_1,\ldots,x_n\Uminus u_n)$, for all $(u_1,\ldots,u_n) \in U^n$. Hence to prove the statement of the lemma it suffices to show that for all indices $i_1,\ldots,i_k$ and all $l \in L$ we are able to express $\po w_{i_1,\ldots,i_k}^l$ in the form \eqref{eq-normal}. 

First, we will prove this for the case $m=1$. For convenience we will write $p = p_1$ and $k = k_1$. If $k=1$ then we can obtain $\po w_{n+1}$ using $\po w_{n}$ in the following way:
\begin{equation} \label{eq-normal-k=1}
\begin{split}
\po w^l_{n+1}(x_1,\ldots,x_{n+1})=\nu_{p,l}(\Lsum_{i=0}^{p-1}\po w^l_n(x_1,\ldots,x_{n-1},x_{n}\Uplus ix_{n+1}) \\
-\Lsum_{i=1}^{p-1}\po w^l_n(x_1,\ldots,x_{n-1},i\Uplus x_{n+1})),
\end{split}
\end{equation}
where $\nu_{p,l}$ is a scalar from $\m R_{\m L}$ inverse to $p$ (i.e. scalar equivalent to an endomorphism $e_{p}$ of $(L,\Lplus)$ such that $e_{p}(x)=\underbrace{x\Lplus\ldots\Lplus x}_p$). We can assume that such inverse scalar exists since $p=|U|$ is coprime to $|L|$ (in fact, we can assume that $R_{L}$ contains all endomorphisms of $\m L$). A straightforward computation shows that the identity \eqref{eq-normal-k=1} indeed holds (see also Lemma 3.1. in \cite{IKK}).

For arbitrary $k$ we prove the statement by induction. So let us assume it holds for all $k'<k$. Again, it is enough to show that we can write $\po w_{n+1}^l$ in the form \eqref{eq-normal}. As an intermediate step, let us define the polynomial $\po t^l_{n+1}(x_1,\ldots,x_{n+1})$ by the sum
\begin{equation} \label{eq-tsum}
\Lsum_{i=0}^{p^{k}-1}\po w^l_n(x_1,\ldots,x_{n-1},x_{n}\Uplus ix_{n+1})
+\Lsum_{i=0}^{p^{k-1}-1}\po w^l_n(x_1,\ldots,x_{n-1},pix_{n}\Uplus x_{n+1}).
\end{equation}
If there is an index $j<n$ such that $x_j\not=0$ then $\po t^l_{n+1}$ is equal $0$. 
We next give a description of $\po t^l_{n+1}$ in the remaining case $x_1 = \ldots = x_{n-1}= 0$. Let $o(x_n)$ be the order of $x_n$ in the group theoretical sense. Notice that the first sum in \eqref{eq-tsum} counts the number of indices $i=0,\ldots,p^{k}-1$ such that $x_n + ix_{n+1}$ is $0$. This value is $0$ if $o(x_{n+1}) < o(x_{n})$ and $p^{k}/o(x_{n+1})$ otherwise. The second sum in \eqref{eq-tsum} counts the number of indices $i= 0,\ldots,p^{k-1}-1$ such that $pix_n + x_{n+1}$ is $0$. It is also easy to see that this value is $0$ if $o(x_{n}) \leq o(x_{n+1}) \neq 1$ and $p^{k-1}/o(x_{n})$ otherwise.

The above analysis shows in particular that if $x_n\not=0$ or $x_{n+1}\not=0$ the value of $\po t_{n+1}^l(0,\ldots,0,x_n,x_{n+1})$ only depends on the values of $x_{n}$ and $x_{n+1}$ modulo $p^{k-1}$. Moreover, if $p^{k-1}$ divides $x_n$ and $x_{n+1}$, then $t_{n+1}^l(0,\ldots,0,x_n,x_{n+1})$ is equal to $(p^k+p^{k-1})l$ if $x_n = x_{n+1} = 0$ and $p^{k-1} l$ else. Hence, 
\[
p^{k-1} \po w^l_{m+1}(x_1,\ldots,x_{n+1})= \po t^l_{m+1}(x_1,\ldots,x_{n+1})+ \po r^l_{n+1}(x_1,\ldots,x_{n+1}) 
\]
where $\po r_{n+1}^l(x_1,\ldots, x_{n+1}) = 0$ if there is a $j < n$ with $x_j \neq 0$, and $\po r_{n+1}^l(0,\ldots,0,x_n, x_{n+1})$ is a function that only depends on the value of $x_n$ and $x_{n+1}$ modulo $p^{k-1}$. In other words $\po r_{n+1}^l(0,\ldots,0,x_n, x_{n+1})$ can be seen as an operation from the submodule $pU$ to $L$. As the group structure of $pU$ is $\mathbb{Z}_{p^{k-1}}$, by induction hypothesis we can express 
$\po r_{n+1}^l$ using a normal form as in \eqref{eq-tsum}. This and the observation that $p^{k-1}$ has an inverse in $\po R_L$ complete the proof for $m=1$.

For $m>1$ the proof is very similar. If $k_1=k_2=\ldots=k_m=1$ it is enough to observe that analogously to \eqref{eq-normal-k=1} we have
\begin{equation*}
\begin{split}
\po w^l_{n_1,n_2,\ldots, n_m+1}(x_1^1,\ldots,x_{n_{m}}^m,x_{n_m+1}^m)=\nu_{p_m,l}(\Lsum_{i=0}^{p_m-1}\po w^l_{n_1,\ldots, n_m}(x_1^1,\ldots,x_{n_m-1}^m, x_{n_m}^m+ i x_{n_m+1}^m)\\
-\Lsum_{i=1}^{p_m-1}\po w^l_{n_1,\ldots,n_i,\ldots, n_m}(x_1^1,\ldots,x_{n_m-1}^m,i \Uplus x_{n_m+1}^m)).
\end{split}
\end{equation*}

Symmetrically we can obtain $\po w^l_{n_1,\ldots,n_j+1,\ldots, n_m}$ for every index $j$. For an induction step on the parameters $k_j$, without loss of generality we also only consider the step $k_m-1 \to k_{m}$. So let us assume the Lemma holds for modules $U$ with group structure $\mathbb{Z}_{p_1^{k_1}}\times\ldots \times\mathbb{Z}_{p_j^{k_m-1}}$. Then we claim that it also holds for $U$ over the group $\mathbb{Z}_{p_1^{k_1}}\times \ldots\times\mathbb{Z}_{p_m^{k_m}}$. 
To prove this claim we can again use the fact that for the term defined by
\begin{equation*}
\begin{split}
\po t^l_{n_1,\ldots,n_m+1}(x_1^1,\ldots,x_{n_m+1}^m)=
\Lsum_{i=0}^{p_m^{k_m}-1}\po w^l_{n_1,\ldots, n_m}(x_1^1,\ldots,x_{n_m-1}^m,x_{n_m}^m\Uplus ix_{n_m+1}^m)\\
\Lplus\Lsum_{i=0}^{p_m^{k_m-1}-1}\po w^l_{n_1,\ldots, n_m}(x_1^1,\ldots,x_{n_m-1}^j,ip_mx_{n_m}^m\Uplus x_{n_m+1}^m).
\end{split}
\end{equation*}
we have $p_m^{k_m-1} \po w^l_{n_1,\ldots, n_m+1} = \po t^l_{n_1,\ldots,n_m+1} + \po r^l_{n_1,\ldots,n_m+1}$,
for some $\po r^l_{n_1,\ldots,n_m+1}$ such that $\po r^l_{n_1,\ldots,n_m+1}(0,0,\ldots,x_{n_m}^m,x_{n_m+1}^m)$ only depends on the values of $x_{n_m}^m$ and $x_{n_m+1}^m$ modulo $p_{m}^{k_m-1}$. The rest of the proof is analogous to the case $m=1$ and we leave it to the reader.
\end{proof}

\section{A recursive principle} 

Let $\m A$ be a finite 2-nilpotent algebra and $\m U, \m L$ be the corresponding modules of coprime order. By Lemma \ref{normal-lemma} we know that then every polynomial of $\m A$ can be written in the form:

\[
\po p ((l_1,u_1),\ldots,(l_n,u_n))
=
\left(\Lsum_{i=1}^n \lambda_i l_i \Lplus \Lsum_{l\in L, c\in U\atop \beta\in U^n}\mu^{l}_{\beta,c} \po w^{l}(\beta \odot \o u \Uplus c),
\Usum_{i=1}^n \alpha_i u_i+u_0 \right).
\]

Some polynomials $\po p$ require $|U|^n$ many $\mu^{l}_{\beta,c} \neq 0$, which might suggest that we cannot efficiently compute this form. In the next section we will however observe that, depending on the input to $\ceqv{A}$, the number of nonzero $\mu^{l}_{\beta,c}$ is polynomial and that $\ceqv{A}$ is essentially equivalent to checking if the expression
\begin{equation}
\label{U-L-expr}
\Lsum_{l\in L, c\in U\atop \beta\in U^n}\mu^{l}_{\beta,c} \po w^{l}(\beta \odot \o x \Uplus c)
\end{equation}
is constant. For now we thus concentrate on analysing properties of expressions (\ref{U-L-expr}). First of all, we would like to simplify (\ref{U-L-expr}) by eliminating constants $c$ and some of $\beta$'s, that are not 'close' to being invertible according to the following definition:

\begin{df}
Let $\beta^{(i)}_1, \beta^{(i)}_2, \ldots, \beta^{(i)}_n \in \mathbb{Z}_{p^k}$. We will say that $\beta^{(i)}$ is nondegenerate if the expression $\Usum_{j=1}^n \beta^{(i)}_j x_j$ can take all values from $\mathbb{Z}_{p^k}$. Moreover $\beta \in U^n$ is nondegenerate, if $\beta^{(i)}$ is nondegenerate for all $1\leq i \leq m$. Moreover, let $(U^n)^*$ denote the set of all nondegenerate $\beta\in U^n$.
\end{df}
Note that $\beta^{(i)}\in (\mathbb{Z}_{p^k})^n$ is nondegenerate iff there is a $j$ such that $\beta^{(i)}_j$ has a multiplicative inverse in $\mathbb{Z}_{p^k}$. Therefore, if in the expression (\ref{U-L-expr}) 
we have some degenerate $\beta \in U^n$, we can find a $d \in U$ such that $\beta = d \cdot\beta'$ (where  $d \cdot \beta'$ is the coordinatewise multiplication) and $\beta'$ is nondegenerate. Thus we can eliminate degenerate expressions and constants by  replacing $\po w^{l}(\beta \odot \o x \Uplus c)$ by $\po w(\beta'\odot \o x)$, where $\po w(u) = \po w^{l}(d\cdot u+c)$. So if $\po m_1, \ldots, \po m_s$ are all the functions from $U\rightarrow L$, then we can transform (\ref{U-L-expr}) to the form 

\begin{equation}
\label{U-L-normalized}
\Lsum_{\substack{\beta \in \m (U^n)^*\\ j =1\ldots s}} \mu_{\beta}^{(j)} \po m_j(\beta \odot \o x).
\end{equation}
Clearly \eqref{U-L-normalized} represents a constant function, if for every fixed $\o x' \in U^n$ it evaluates to the same value, say $c$. We can treat all the equations obtained this way as a system of $|U^n|$ many equations over variables $\mu_{\beta}^{(j)}$:
\[
c = \Lsum_{\substack{\beta \in \m (U^n)^*\\ j =1\ldots s}} \mu_{\beta}^{(j)} \po m_j(\beta \odot \o x')
\]
Solving this system by Gaussian elimination would potentially require exponential time and is thus not the way to go. Our technique will be to sum some of those equation in an organised manner to derive a nice characterisation of the solution set. To do it properly we have to understand how different $\beta$'s interact with each other. For instance there are $\beta, \alpha$ such that values of $\beta \odot \o x$ and $\alpha \odot \o x$ are independent, i.e. for any choice of $c,d \in U$ we can find $\o x$ with $\beta \odot \o x = c$ and $\alpha \odot \o x = d$. On the other hand we can find a pair $\alpha, \beta$ such that value of $\beta \odot \o x$ implies value of $\alpha \odot \o x$, which means that for every $c$ there exist $d$ such that $\beta \odot \o x=c \implies \alpha \odot \o x = d$. We are going to measure the degree of dependence of $\alpha$ and $\beta$ by the concept of $M$-dependence defined below. 

Let $\langle a,b \rangle$ denote the standard inner product (in every module $\Z_{p_i^{k_i}}$). Note that $\beta \odot \overline{x} = \Big(\langle \beta^{(1)}, x^{(1)} \rangle, \ldots, \langle \beta^{(m)}, x^{(m)}\rangle \Big)$. In our definition we will handle these coordinates separately. Now to describe dependencies between nondegenerate $\beta^{(i)}, \alpha^{(i)} \in (\Z_{p_i^{k_i}})^n$ take some invertible $\alpha^{(i)}_j \in \Z_{p_i^{k_i}}$
and put $a = (\alpha^{(i)}_{j})^{-1}\cdot\beta_{j}^{(i)}$. Then $\beta^{(i)} = a \alpha^{(i)} + \widetilde{m}$ for some $\widetilde{m} \in (\Z_{p_i^{k_i}})^n$. It is easy to check, that the image of $f(x^{(i)}) = \langle \widetilde{m}, x^{(i)} \rangle$ does not depend on the choice of $\alpha^{(i)}_j$ and this image is obviously some subgroup of $\Z_{p_i^{k_i}}$. We will call this subgroup $M$. We can see that $\widetilde{m}_j = 0$ so for all $c \in \Z_{p_i^{k_i}}, m \in M$ the system of equation 
$$
\begin{cases}
\langle \alpha^{(i)}, x^{(i)} \rangle = c \\
\langle \widetilde{m}, x^{(i)} \rangle = m
\end{cases}
$$
has a solution (as $\alpha^{(i)}_j$ is invertible and $\widetilde{m}_j = 0$).  We will say that $\beta^{(i)}$ is $M$-dependent on $\alpha^{(i)}$ if $M$ is the image of $\langle \widetilde{m}, x^{(i)} \rangle$. Notice that, if $M=\Z_{p_i^{k_i}}$ then any assumption on the value of $\langle \alpha^{(i)},x^{(i)} \rangle$ does not imply anything on the value of $\langle \beta^{(i)}, x^{(i)} \rangle$, so the expressions $ \langle \beta^{(i)}, x^{(i)} \rangle, \langle \alpha^{(i)}, x^{(i)} \rangle$ are in that sense independent. On the contrary $M=\{ 0 \}$ and $\langle \alpha^{(i)}, x^{(i)} \rangle = c$ implies that $\langle \beta^{(i)},x^{(i)} \rangle = a \cdot c$ for $a = (\alpha^{(i)}_{j})^{-1}\cdot\beta_{j}^{(i)}$ (for $a$ from the definition of dependence).

We give a lemma, summarising some basic properties of $M$-dependence (proof left to the reader). In all cases we assume, that $\beta^{(i)}, \alpha^{(i)} \in (Z_{p_i^{k_i}})^n$ are non-degenerate:

\begin{lm}
\label{lm-util}
\begin{enumerate}
\item The relation of $M$-dependence is symmetric, so if $\beta^{(i)}$ $M$-depends on $\alpha^{(i)}$, then $\alpha^{(i)}$ $M$-depends on $\beta^{(i)}$.
\item Let $\leq(M, \beta^{(i)})$ denote the set of all $\alpha^{(i)} \in \Z_{p_i^{k_i}}$ that are $L$-dependent on $\beta^{(i)}$ for some $L\leq M$. Let $\leq(M)$ be the set of all pairs $(\beta^{(i)},\beta'^{(i)})$ such that $\beta'^{(i)} \in \leq(M, \beta^{(i)})$. Then $\leq(M)$ is an equivalence relation between $\beta^{(i)} \in (\Z_{p_i^{k_i}})^n$.
\item\label{lm-util-1} The following are equivalent
\begin{itemize}
\item $\beta^{(i)} \in (\Z_{p_i^{k_i}})^n$ is $M$-dependent on $\alpha^{(i)}$ 
\item for fixed $x^{(i)}\in \Z_{p_i^{k_i}}$ system of equations 
$$\begin{cases}
\langle \alpha^{(i)}, x^{(i)} \rangle = \langle \alpha^{(i)}, y^{(i)}\rangle\\
\langle \beta^{(i)}, x^{(i)} \rangle = \langle \beta^{(i)}, y^{(i)} \rangle +\ m
\end{cases}
$$
has a solution iff $m \in M$.
\end{itemize}
Moreover the number of solution to the above system of equations for any given $m\in M$ is ${p}^{kn}\over|M|$.
\item\label{lm-util-3} For $\beta \in (U^n)^*$ let $E^{(j)}_\beta (\o x,m)$ be the set of those evaluations $\overline{x}'$, which satisfy the following conditions: $(\beta\odot \o x)^{(j)}+m = (\beta\odot \overline{x}')^{(j)}$ and  $x^{(i)} = x'^{(i)}$ for $i\neq j$. Then $|E^{(j)}_\beta(\o x,m)|=|E^{(j)}_\beta(\o y,m)|$  for every $\o x,\o y\in U^n$.
\end{enumerate}
\end{lm}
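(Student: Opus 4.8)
The plan is to reduce every part of the lemma to a statement about the group homomorphisms $\langle\beta^{(i)},\cdot\rangle\colon(\Z_{p_i^{k_i}})^n\to\Z_{p_i^{k_i}}$, argued one coordinate at a time; I fix one coordinate, abbreviate it by $\Z_{p^k}$, write $\beta,\alpha$ for $\beta^{(i)},\alpha^{(i)}$, and put $K_\beta=\ker\langle\beta,\cdot\rangle$. Three elementary facts carry all the real work. (i) Since $\beta$ is nondegenerate, $\langle\beta,\cdot\rangle$ is surjective, so $[(\Z_{p^k})^n:K_\beta]=p^k$. (ii) The subgroups of $\Z_{p^k}$ form a chain and are therefore determined by their order. (iii) If $\alpha_j$ is a unit and $u_0=\alpha_j^{-1}\beta_j$, then the $j$-th coordinate of $\beta-u_0\alpha$ vanishes, and consequently the subgroup $M$ such that $\beta$ is $M$-dependent on $\alpha$ equals both $\langle\beta,K_\alpha\rangle$ and $\operatorname{im}\langle\beta-u_0\alpha,\cdot\rangle$, that is, the subgroup generated by the gcd of the coordinates of $\beta-u_0\alpha$; in particular it is independent of $j$, as was already noted before the lemma.

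I would handle (1) and (3) through the joint map $\Phi\colon x\mapsto(\langle\alpha,x\rangle,\langle\beta,x\rangle)$ with image $G\leq(\Z_{p^k})^2$. For (3), substituting $v=y-x$ rewrites the system as $\langle\alpha,v\rangle=0,\ \langle\beta,v\rangle=m$; for fixed $x$ its solution set is a coset of $K_\alpha\cap K_\beta$ when $m\in\langle\beta,K_\alpha\rangle$ and is empty otherwise, and since by (iii) $\beta$ is $M$-dependent on $\alpha$ if and only if $M=\langle\beta,K_\alpha\rangle$, this is the asserted equivalence. By the first isomorphism theorem $\card{K_\alpha\cap K_\beta}=p^{kn}/\card G$; as $\Phi$ is onto each factor and the kernel of the projection $G\to\Z_{p^k}$ onto the first factor is $\{0\}\times M$, we get $\card G=p^k\card M$, which yields the number of solutions. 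For (1), the fibres of $\Phi$ identify $M=\langle\beta,K_\alpha\rangle$ with $\{t:(0,t)\in G\}$ and $\langle\alpha,K_\beta\rangle$ with $\{s:(s,0)\in G\}$; both subgroups have order $\card G/p^k$ by (i), hence coincide by (ii), and $\langle\alpha,K_\beta\rangle=M$ is precisely the symmetry of $M$-dependence.

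The real work is (2), and the main obstacle is that transitivity of $\leq(M)$ is not apparent from the definition: a naive index estimate gives only that $[K_\beta:K_\beta\cap K_{\beta''}]$ divides $[K_\beta:K_\beta\cap K_{\beta'}]\cdot[K_{\beta'}:K_{\beta'}\cap K_{\beta''}]$, a product that can be as large as $\card M\cdot\card M$ and so does not force the left side to divide $\card M$. My proposed remedy is to reformulate the relation: writing $\card M=p^t$, so that $M=p^{k-t}\Z_{p^k}$, I claim that for nondegenerate $\beta,\beta'$ one has $\beta'\in\leq(M,\beta)$ if and only if $\beta'\equiv u\beta\pmod{p^{k-t}}$ coordinatewise for some unit $u$ — equivalently, the reductions of $\beta$ and $\beta'$ modulo $p^{k-t}$ lie in one orbit of the scalar action of the unit group on $(\Z_{p^{k-t}})^n$. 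This follows from (iii): $\langle\beta',K_\beta\rangle=\operatorname{im}\langle\beta'-u_0\beta,\cdot\rangle$ (with $u_0=\beta_j^{-1}\beta'_j$, $\beta_j$ a unit) is contained in $M$ exactly when every coordinate of $\beta'-u_0\beta$ is divisible by $p^{k-t}$; and conversely any unit $u$ with $\beta'\equiv u\beta\pmod{p^{k-t}}$ agrees with $u_0$ modulo $p^{k-t}$, as one sees by comparing $j$-th coordinates. Once this reformulation is available, reflexivity (take $u=1$), symmetry ($u\mapsto u^{-1}$) and transitivity (compose the units) are immediate, and the relation never leaves the set of nondegenerate vectors.

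Finally, (4) is short: all components other than the $j$-th are forced to coincide with $\o x$, so $E^{(j)}_\beta(\o x,m)$ is in bijection with $\{w\in(\Z_{p_j^{k_j}})^n:\langle\beta^{(j)},w\rangle=\langle\beta^{(j)},x^{(j)}\rangle+m\}$, a coset of $\ker\langle\beta^{(j)},\cdot\rangle$ which is nonempty because $\beta^{(j)}$ is nondegenerate; hence $\card{E^{(j)}_\beta(\o x,m)}=p_j^{k_j(n-1)}$ independently of $\o x$ and of $m$, which gives $\card{E^{(j)}_\beta(\o x,m)}=\card{E^{(j)}_\beta(\o y,m)}$. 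In sum, the only conceptual step is the modular reformulation behind (2); parts (1), (3) and (4) are bookkeeping with the first isomorphism theorem and the order of cyclic $p$-groups, and (1) can alternatively be recovered from the reformulation for (2) by letting $M$ range over all subgroups.
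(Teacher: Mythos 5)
The paper gives no proof of this lemma at all (it is explicitly ``left to the reader''), so there is no official argument to compare with; judged on its own terms your proof is sound, and you correctly isolate the one non-routine point, namely transitivity of $\leq(M)$ in part (2), which your characterisation via scalar orbits modulo $p^{k-t}$ handles cleanly. Two details need attention. First, in part (3) your computation gives $\card{K_\alpha\cap K_\beta}=p^{kn}/\card{G}=p^{k(n-1)}/\card{M}$, which is \emph{not} the $p^{kn}/\card{M}$ asserted in the lemma, yet you write that this ``yields the number of solutions'' as if the two agreed. Your value is in fact the correct one (take $n=k=1$, $\alpha=\beta=(1)$, $M=\{0\}$: the system forces $y=x$, so there is exactly one solution, not $p$); the constant printed in the lemma, and the matching factor $p_1^{kn}/\card{M_o}$ appearing in the proof of Lemma \ref{most-important}, are off by $p^{k}$ --- harmlessly, since downstream only independence from $\overline{x}$ and coprimality with $\card{L}$ are used --- but you must say explicitly that you are correcting the stated constant rather than silently asserting agreement with it.

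Second, in the forward direction of your reformulation of (2) you derive $\beta'\equiv u_0\beta \pmod{p^{k-t}}$ with $u_0=\beta_j^{-1}\beta'_j$, but your criterion demands congruence to $u\beta$ for some \emph{unit} $u$, and $\beta'_j$ need not be invertible, so $u_0$ need not be a unit a priori. The repair is one line but should be written down: if $t<k$ and $p$ divided $u_0$, then every coordinate of $\beta'$ would be divisible by $p$, contradicting nondegeneracy of $\beta'$; and if $t=k$ the congruence condition is vacuous, $\leq(M)$ is the total relation on nondegenerate tuples, and there is nothing to prove. With these two points addressed, parts (1), (3) and (4) are exactly the bookkeeping with kernels and the first isomorphism theorem that you describe, and part (2) is complete; the argument also furnishes the uniqueness and $j$-independence of the dependence subgroup that the paper only asserts before the lemma.
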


For $N\leq \Z_{p_i^{k_i}}$ let $\dep^{(i)}(\beta, N)$ denote the set of all $\alpha$ such that $\alpha^{(i)}$ is $N$-dependent on $\beta^{(i)}$. Let $\leq^{(i)}(N, \beta)$ denote the set of all $\alpha$ with $(\beta^{(i)}, \alpha^{(i)}) \in \leq(N)$. Moreover let $\leq^{(i)}(N)$ be equivalence relation containing all pairs $(\alpha, \beta)$ with $(\alpha^{(i)}, \beta^{(i)}) \in \leq(N)$. Let $M^{(i)}_{k}=\langle p_i^{k}\rangle$ be the subgroup of $\Z_{p_i^{k_i}}$ generated by $p_i^{k}$. 

Notice, that a given expression \eqref{U-L-normalized} represents a constant function if and only if it does not depend on $x^{(i)}$ for any index $i$. The following lemma will therefore be key to construct the recursive algorithm for finite $2$-nilpotent algebra with coprime $|U|$, $|L|$:

\begin{lm}
\label{most-important}
Let $\po m_1, \po m_2, \ldots, \po m_s$ be functions from  $U = \mathbb{Z}_{p_1^{k_1}} \times \mathbb{Z}_{p_2^{k_2}} \times \ldots \times \mathbb{Z}_{p_m^{k_m}}$ to $L$ and let
$$\po t(\o x) = \Lsum_{\substack{\beta \in \m (U^n)^*\\ l =1\ldots s}} \mu_{\beta}^{(l)} \po m_l(\beta \odot \o x).$$
Then, if $t$ does not depend on the variables $x^{(i)}$, we have that for every $\beta \in \m (U^n)^*$:
\begin{equation}\label{eq-most-important}
\Lsum_{\substack{\alpha \in \leq^{(i)}(M^{(i)}_{1}, \beta)\\ l =1\ldots s}} \mu_{\alpha}^{(l)} \po m_l(\alpha \odot \o x) 
\end{equation}
does not depend on variables $x^{(i)}$ for all $i$.
\end{lm}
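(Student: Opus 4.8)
The plan is to reduce the statement — one coordinate at a time, and after freezing the other blocks of variables — to a clean assertion about sums of one–dimensional ``ridge'' functions over $(\mathbb{Z}_{p_i^{k_i}})^n$, and then to prove that assertion by induction on $k_i$. As a first step I would identify the equivalence classes of $\leq^{(i)}(M^{(i)}_1)$. Write $p=p_i$, $k=k_i$ and, for a nondegenerate $\gamma\in(\mathbb{Z}_{p^k})^n$, let $\bar\gamma\in(\mathbb{Z}_p)^n$ be its reduction mod $p$ (which is nonzero) and $[\bar\gamma]$ the line it spans. Unwinding the definition of $M$–dependence: $\alpha\in\leq^{(i)}(M^{(i)}_1,\beta)$ holds iff, writing $\beta^{(i)}=a\alpha^{(i)}+\widetilde m$ with $\widetilde m$ vanishing at a coordinate where $\alpha^{(i)}$ is invertible, every coordinate of $\widetilde m$ lies in $\langle p\rangle$; this happens iff $\bar\beta^{(i)}$ is a scalar multiple of $\bar\alpha^{(i)}$, and since both are nonzero the scalar is a unit of $\mathbb{Z}_p$. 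Hence $\alpha\in\leq^{(i)}(M^{(i)}_1,\beta)\iff[\bar\alpha^{(i)}]=[\bar\beta^{(i)}]$, so the classes are precisely the fibres of $\gamma\mapsto[\bar\gamma]$; moreover, by the chain structure of the subgroups of $\mathbb{Z}_{p^k}$, if $[\bar\alpha^{(i)}]\neq[\bar\beta^{(i)}]$ then $x\mapsto(\langle\alpha^{(i)},x\rangle,\langle\beta^{(i)},x\rangle)$ is onto $(\mathbb{Z}_{p^k})^2$.

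Next I would fix a valuation of the blocks $x^{(j)}$, $j\neq i$; since both the hypothesis and the conclusion are tested after such a freeze, it suffices to treat $\po t$ as a function of $y:=x^{(i)}\in(\mathbb{Z}_{p^k})^n$ alone, for every context. Collecting the summands of $\po t$ according to their $i$-th block $v:=\beta^{(i)}$ (nondegenerate, since $\beta\in(U^n)^*$) and according to the — finitely many — functions $g\colon\mathbb{Z}_{p^k}\to L$ of the form $z\mapsto\po m_l(\dots,z,\dots)$ obtained from the frozen coordinates, one rewrites $\po t=\sum_{g}\sum_{v}c_{g,v}\,g(\langle v,y\rangle)$ and, by the previous paragraph, $\po t_C=\sum_{g}\sum_{v:\,[\bar v]=\ell}c_{g,v}\,g(\langle v,y\rangle)$, where $\ell$ is the line attached to $C$, $v$ runs over nondegenerate vectors and $c_{g,v}\in R_L$. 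So it is enough to prove the following ridge statement: \emph{if $h=\sum_{g}\sum_{v}c_{g,v}\,g(\langle v,\cdot\rangle)$ on $(\mathbb{Z}_{p^k})^n$ (nondegenerate $v$, finitely many $g$, scalars $c_{g,v}\in R_L$) is constant, then each line part $h_\ell:=\sum_{g}\sum_{v:[\bar v]=\ell}c_{g,v}\,g(\langle v,\cdot\rangle)$ is constant.} Throughout, $p$ is invertible on $L$ because $|L|$ and $|U|$ are coprime, and we use this freely.

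The ridge statement I would prove by induction on $k$. The Fourier support of $g(\langle v,\cdot\rangle)$ lies in the cyclic subgroup $\langle v\rangle\leq(\mathbb{Z}_{p^k})^n$, and if $\xi$ is a \emph{primitive} frequency (one with $\bar\xi\neq0$) then $mv=\xi$ forces $m$ to be a unit and $[\bar v]=[\bar\xi]$; hence at primitive frequencies distinct $h_\ell$ have disjoint Fourier supports. Since $h$ is constant, $\sum_\ell h_\ell$ has no nonzero Fourier coefficients, so every $h_\ell$ vanishes at every primitive frequency — equivalently, $h_\ell$ factors through the reduction $\pi\colon(\mathbb{Z}_{p^k})^n\to(\mathbb{Z}_{p^{k-1}})^n$. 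For $k=1$ this already says $h_\ell$ is constant. For $k\ge2$, let $\Phi$ be the operator that averages a function over translations by $\ker\pi$; it fixes every function factoring through $\pi$ (in particular $h$ and each $h_\ell$), and it carries $g(\langle v,\cdot\rangle)$ to $\bar g(\langle\pi(v),\cdot\rangle)$, a ridge function over $(\mathbb{Z}_{p^{k-1}})^n$ along the nondegenerate direction $\pi(v)$ with the same attached line, where $\bar g$ is $g$ averaged over the fibres of $\mathbb{Z}_{p^k}\to\mathbb{Z}_{p^{k-1}}$. Therefore $h=\Phi(h)$, viewed over $(\mathbb{Z}_{p^{k-1}})^n$, is a constant function with a ridge presentation whose line-$\ell$ part equals $\Phi(h_\ell)=h_\ell$; the induction hypothesis then forces each $h_\ell$ to be constant, which proves the ridge statement and hence the lemma.

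The step I expect to be the real obstacle is exactly this one: separating, inside a constant sum $h=\sum_\ell h_\ell$, the contributions of directions that are proportional modulo $p$ but not modulo $p^k$. Everything transversal — directions lying on different lines of $(\mathbb{Z}_p)^n$ — is easy: such directions are independent (cf.\ Lemma~\ref{lm-util} and the joint surjectivity noted in the first paragraph), so their ridge functions occupy disjoint Fourier modes and are killed by averaging. The delicate point is that the ``same-line'' pieces $h_\ell$ can only carry Fourier mass divisible by $p$, and it is precisely this fact that permits the descent from $\mathbb{Z}_{p^k}$ to $\mathbb{Z}_{p^{k-1}}$ and makes the induction on $k$ go through. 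If one prefers to avoid roots of unity, the Fourier computation can be replaced by an elementary inclusion–exclusion over cosets of the relevant subgroups.
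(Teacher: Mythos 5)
Your proof is correct, and it takes a genuinely different route from the paper's. You first make the equivalence classes of $\leq^{(i)}(M^{(i)}_{1})$ explicit as the fibres of $\beta\mapsto[\overline{\beta^{(i)}}\bmod p_i]$, freeze the other coordinate blocks, and reduce the lemma to a self-contained ``ridge'' statement on $(\mathbb{Z}_{p^k})^n$; there the decoupling of distinct lines is read off from Fourier supports (the support of $g(\langle v,\cdot\rangle)$ is the cyclic group $\langle v\rangle$, and a primitive frequency $\xi=mv$ forces $m$ to be a unit, hence determines the line), which shows each line part has no primitive Fourier mass, i.e.\ factors through reduction mod $p^{k-1}$, and the averaging operator over $\ker\pi$ then drives an induction on $k$. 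The paper instead argues directly on the sums as the algorithm sees them: it inducts on $j$ along the chain $M_j=\langle p_i^{j}\rangle$, forms coset averages $\mathbf{Dt}(\overline{x},\beta,M_j)$ by summing $\po t$ over the evaluation sets $E^{(1)}_\beta(\overline{x},m)$ of Lemma \ref{lm-util}, and peels off the contributions of the $\alpha$'s that are $M_o$-dependent on $\beta$ by tracking multiplicities ($\num(\alpha,o)$ and the weights $w_{o,j}$), with coprimality of $|U|$ and $|L|$ used to cancel the factors $p_i^{kn}$. The two mechanisms are cousins --- your average over $\ker\pi$ and the paper's sums over $E^{(1)}_\beta(\overline{x},m)$ are both coset averages, and invertibility of $p_i$ on $L$ plays the same role in each --- but your organization (explicit identification of the classes, an isolated one-block lemma, and the crisp support-disjointness argument at primitive frequencies) replaces the paper's multiplicity bookkeeping and is easier to audit, while the paper's version stays entirely elementary and manipulates exactly the expressions its algorithm later computes. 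Two minor points: the coprimality of $|U|$ and $|L|$ is not stated in the lemma but is in force throughout the section, so your free use of $p^{-1}$ on $L$ is legitimate; and for full rigour the Fourier transform of $L$-valued functions should be taken in $L\otimes_{\mathbb{Z}}\mathbb{Z}[\zeta_{p^k}]$, where inversion is available precisely because $|U|$ acts invertibly on $L$ --- or, as you note, replaced by the equivalent coset-averaging computation.
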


\begin{proof}
Without loss of generality we assume that $i=1$. Then we put $M_j = M^{(1)}_j$ for all $1\leq j \leq k_1$.
Moreover let $\widetilde{M_j}= \langle (p_1^{j}, 0, \ldots, 0) \rangle$. So $\widetilde{M_j}$ is the subgroup of the underlying group of $\m U$, whereas $M_j$ is the subgroup of $\Z_{p_1^{k_1}}$. We will always write $M_{[0]}$ for $M_0$ (zero in index) to distinguish it from $M_o$ (small letter $o$ in index, which will be used as variable). To prove the theorem we will show by induction, that for $0\le{} j\le{} k_1$ we have that 

\begin{equation}\label{eq-most-importan-ind}
\Lsum_{\substack{\alpha \in \leq^{(1)}(M_1, \beta)\\ l =1\ldots s}} \Lsum_{m \in \widetilde{M_j}} \mu_{\alpha}^{(l)} \po m_l(\alpha \odot \o x+m)
\end{equation}
does not depend on variable $x^{(1)}$. For $j=k_1$, \eqref{eq-most-importan-ind} is exactly the statement of the lemma.

For $j=0$ the expression \eqref{eq-most-importan-ind} obviously does not depend on $x^{(1)}$.  So assume that it does not depend on $x^{(1)}$ for $j'<j$ and we will show that also for $j$ the expression \eqref{eq-most-importan-ind} does not depend on $x^{(1)}$. 

Consider 
\[
\po{Dt}(\o x,\beta,M_j)=\Lsum_{m\in M_j} \Lsum_{\o z\in E^{(1)}_\beta (\o x,m)}\po t(\o z).
\]
By point \ref{lm-util-3} of Lemma \ref{lm-util} and from the fact that $\po t$ does not depend on variables $x^{(1)}$ we have 
\[
\po{Dt}(\o x,\beta,M_j)=\po{Dt}(\o y,\beta,M_j)
\]
for every $\o x,\o y\in U^n$. Pick $\o x$ and $\o y$ such that $x^{(i)}=y^{(i)}$ for $i>1$ and $\beta\in {(U^n)}^*$. By definitions of $\po t$ and $\po{Dt}$ we have that:
\[
\po{Dt}(\o x,\beta, M_{j})=\Lsum_{m\in M_{j}} \Lsum_{\o z\in E^{(1)}_\beta (\o x,m)}\Lsum_{\substack{\alpha \in  (U^n)^*\\ l =1,\ldots, s}} \mu_{\alpha}^{(l)} \po m_l(\alpha \odot \o z)
\]

By the fact that every $\alpha\in (U^n)^*$ is $M_{o}$-dependent on the first coordinate on $\beta$ with exactly one $M_o$ (see definition of dependence) we obtain that:
\[
\po{Dt}(\o x,\beta, M_{j})=\Lsum_{\substack{m\in M_{j}\\l =1,\ldots, s}} \Lsum_{\o z\in E^{(1)}_\beta (\o x,m)}\Lsum_{\substack{\alpha \in \dep^{(1)}( \beta, M_{o})\\ o =0,\ldots, k_1}} \mu_{\alpha}^{(l)} \po m_l(\alpha \odot \o z).
\]
We can regroup summands
\[
\po{Dt}(\o x,\beta,M_{j})=\Lsum_{l =1,\ldots, s} \Lsum_{\substack{\alpha \in \dep^{(1)}( \beta, M_{o})\\ o =0,\ldots, k_1}}\Lsum_{\substack{\o z\in E^{(1)}_\beta (\o x,m)\\m\in M_{j}}} \mu_{\alpha}^{(l)} \po m_l(\alpha \odot \o z).
\]
and then by definition of $E^{(1)}_\beta (\o x,m)$ and by points \ref{lm-util-1} and \ref{lm-util-3} of Lemma \ref{lm-util} we obtain that

\[
\po{Dt}(\o x,\beta, M_{j})=\Lsum_{l =1,\ldots, s} \Lsum_{\substack{\alpha \in \dep^{(1)}( \beta, M_{o})\\ o =0,\ldots, k_1}}\Lsum_{\substack{m'\in \widetilde{M_{o}}\\m\in \widetilde{M_{j}}}} {p_1^{kn}\over|M_{o}|} \mu_{\alpha}^{(l)}  \po m_l(\alpha \odot \o x\Uplus m \Uplus m').
\]
Now, it is easy to see that
\begin{gather*}
\po{Dt}(\o x,\beta,M_{j})=\Lsum_{l =1,\ldots, s} \Lsum_{\substack{\alpha \in \dep^{(1)}( \beta, M_{o})\\ o =0,\ldots, k_1}}\Lsum_{m\in \widetilde{M}_{min\set{o,j}}} { | M_{o}| |M_{j}|\over | M_{min\set{o,j}}| }
{p_1^{kn}\over | M_{o}|} \mu_{\alpha}^{(l)}  \po m_l(\alpha \odot \o x\Uplus m )=\\
=\Lsum_{l =1,\ldots, s} \Lsum_{\substack{\alpha \in \dep^{(1)}( \beta, M_{[o]})\\ o =0,\ldots, k_1}}\Lsum_{m\in \widetilde{M}_{min\set{o,j}}} {p_1^{kn} |M_{j}|\over |M_{min\set{o,j}}| } \mu_{\alpha}^{(l)}  \po m_l(\alpha \odot \o x\Uplus m ).
\end{gather*}
Denote
\[
w_{o,j}={ p_1^{kn} |M_{j}|\over |M_{min\set{o,j}}| }.
\]
After the substitution above we obtain the following:
\[
\po{Dt}(\o x,\beta,M_{j})=\Lsum_{l =1,\ldots, s} \Lsum_{\substack{\alpha \in \dep^{(1)}( \beta, M_{o})\\ o =0,\ldots, k_1}}\Lsum_{m\in \widetilde{M}_{min\set{o,j}}}w_{o,j} \mu_{\alpha}^{(l)}  \po m_l(\alpha \odot \o x\Uplus m ).
\]
Observe that for any $\alpha$  which is $M_{[0]}$-dependent on the first coordinate on $\beta$ and every $l\in\set{1,\ldots,s}$ we have that 
\[
 \Lsum_{m\in \widetilde{M_{[0]}}}w_{0,j} \mu_{\alpha}^{(l)}  \po m_l(\alpha \odot \o x\Uplus m )=\Lsum_{m\in \widetilde{M_{[0]}}}w_{o,j} \mu_{\alpha}^{(l)}  \po m_l(\alpha \odot \o y\Uplus m ).
\]
Hence, for 
\[
\po{Dt'}(\o x,\beta,M_{j})=\Lsum_{l =1,\ldots, s} \Lsum_{\substack{\alpha \in \dep^{(1)}( \beta, M_{o})\\ o =1,\ldots, k_1}}\Lsum_{m\in \widetilde{M}_{min\set{o,j}}}w_{o,j} \mu_{\alpha}^{(l)}  \po m_l(\alpha \odot \o x\Uplus m )
\]
we have that
\[
\po{Dt'}(\o x,\beta, M_{j})=\po{Dt'}(\o y,\beta,M_{j}).
\]
Note that classes of $\leq^{(1)}(M_{j})$ are contained in classes of  $\leq^{(1)}(M_{1})$. Let $\beta_1$, $\beta_2$,\ldots,$\beta_u$ be representants of each $\leq^{(1)}(M_{j})$ class contained in $\leq(M_{1})$ class of $\beta$. As we can substitute $\beta$ with any $\beta_i$ in the previous equation, we get:
\[
\Lsum_{i=1,\ldots,u}\po{Dt'}(\o x,\beta_i,M_{j})=\Lsum_{i=1,\ldots,u}\po{Dt'}(\o y,\beta_i, M_{j}),
\]
and
\[
\Lsum_{i=1,\ldots,u}\po{Dt'}(\o x,\beta_i,M_{j})=\Lsum_{\substack{i=1,\ldots,u	\\l =1,\ldots, s}} \Lsum_{\substack{\alpha \in \dep^{(1)}( \beta_i, M_{o})\\ o =1,\ldots, k_1}}\Lsum_{m\in \widetilde{M}_{min\set{o,j}}}w_{o,j} \mu_{\alpha}^{(l)}  \po m_l(\alpha \odot \o x\Uplus m ).
\]
By rewriting of above expression we obtain that $\Lsum_{i=1,\ldots,u}\po{Dt'}(\o x,\beta_i, M_{j})$ is equal
\[
\Lsum_{\substack{ l=1,\ldots, s\\o =1,\ldots, k_1}} \Lsum_{\substack{\alpha \in \dep^{(1)}( \beta, M_{i})\\ i =1,\ldots, k_1}}\Lsum_{m\in \widetilde{M}_{min\set{o,j}}}\num(\alpha,o)w_{o,j} \mu_{\alpha}^{(l)}  \po m_l(\alpha \odot \o x\Uplus m ),
\]
where $\num(\alpha,o)$ for $\alpha \in \dep^{(1)}( \beta, M_{o})$ is the number of $\beta_i$ from which $\alpha$ is $M_{o}$-dependent on the first coordinate. We have obtained this expression since the $\leq^{(1)}(M_{j})$ classes of $\beta_i$ cover the $\leq^{(1)}(M_{1})$ class of $\beta$ and hence for every $1 \leq o < j$ and $\alpha$ there is $\beta_i$ such that $\alpha$ is $M_{o}$ dependent form $\beta_i$. Moreover, for every such $\alpha$ and $o$ the value of $\num(\alpha,o)$ depends only on $o$. So, in such case we can use $\num(\beta,o)$ instead of $\num(\alpha,o)$.

Note that by induction hypothesis we know that value of the following sum
\[
\Lsum_{\substack{\alpha \in \dep^{(1)}( \beta_i, M_{i})\\ l =1,\ldots, s}} \Lsum_{m \in \widetilde{M_o}} \mu_{\alpha}^{(l)} \po m_l(\alpha \odot \o x+m),
\]
for  $1\leq o < j$  does not depend on first coordinate. Now, by multiplying by constants and adding above sum for $1\leq o < j$  we obtain that 
\[
\Lsum_{\substack{ l=1,\ldots, s\\o =1,\ldots, j-1}} \Lsum_{\substack{\alpha \in \dep^{(1)}( \beta, M_{i})\\ i =1,\ldots, k_1}}\Lsum_{m\in \widetilde{M_{o}}}\num(\beta,o)w_{o,j} \mu_{\alpha}^{(l)}  \po m_l(\alpha \odot \o x\Uplus m ).
\]
does not depend on the  first coordinate. Therefore, we know that 
\begin{gather*}
\Lsum_{\substack{ l=1,\ldots, s\\o =j,\ldots, k_1}} \Lsum_{\substack{\alpha \in \dep^{(1)}( \beta, M_{i})\\ i =1,\ldots, k_1}}\Lsum_{m\in \widetilde{M_{j}}}\num(\alpha,o)w_{o,j} \mu_{\alpha}^{(l)}  \po m_l(\alpha \odot \o x\Uplus m )=\\
=\Lsum_{\substack{ l=1,\ldots, s\\o =j,\ldots, k_1}} \Lsum_{\substack{\alpha \in \dep^{(1)}( \beta, M_{i})\\ i =1,\ldots, k_1}}\Lsum_{m\in \widetilde{M_{j}}}\num(\alpha,o)w_{o,j} \mu_{\alpha}^{(l)}  \po m_l(\alpha \odot \o y\Uplus m ).
\end{gather*}
Notice that for $o\geq j$: $w_{o,j}=p_1^{kn}$. Hence and by the fact that $|U|$ and $|L|$ are coprime we obtain that
\begin{gather*}
\Lsum_{\substack{ l=1,\ldots, s\\o =j,\ldots, k_1}} \Lsum_{\substack{\alpha \in \dep^{(1)}( \beta, M_{i})\\ i =1,\ldots, k_1}}\Lsum_{m\in \widetilde{M_{j}}}\num(\alpha,o) \mu_{\alpha}^{(l)}  \po m_l(\alpha \odot \o x\Uplus m )=\\
=\Lsum_{\substack{ l=1,\ldots, s\\o =j,\ldots, k_1}} \Lsum_{\substack{\alpha \in \dep^{(1)}( \beta, M_{i})\\ i =1,\ldots, k_1}}\Lsum_{m\in \widetilde{M_{j}}}\num(\alpha,o) \mu_{\alpha}^{(l)}  \po m_l(\alpha \odot\o y\Uplus m ).
\end{gather*}
Observe that $\num(\alpha, o)$ for $\alpha\in \dep^{(1)}(\beta,i)$, $i\ge{}1$ is equal $1$ for exactly one $o\geq j$ and $0$ else. Finally, from this facts
\begin{gather*}
\Lsum_{l=1,\ldots, s} \Lsum_{\substack{\alpha \in \dep^{(1)}( \beta, M_{i})\\ i =1,\ldots, k_1}}\Lsum_{m\in \widetilde{M_{j}}} \mu_{\alpha}^{(l)}  \po m_l(\alpha \odot \o x\Uplus m )=\\
=\Lsum_{ l=1,\ldots, s} \Lsum_{\substack{\alpha \in \dep^{(1)}( \beta, M_{i})\\ i =1,\ldots, k_1}}\Lsum_{m\in \widetilde{M_{j}}} \mu_{\alpha}^{(l)}  \po m_l(\alpha \odot \o y\Uplus m ).
\end{gather*}
This completes the proof of \eqref{eq-most-importan-ind} and hence the proof of the lemma.
\end{proof}

\section{Circuit equivalence}
In two previous sections we have  investigated the structure of $2$-nilpotent algebras. We know that every such algebra $\m A$ is of the form $\m L \otimes^{\widehat{F}} \m U$. We have devoted special attention to algebras for which $|L|$ and $|U|$ are co-prime  and we have obtained some useful tools for such algebras. In particular, Lemma \ref{most-important} gives us a method how to reduce our problem to some set of simpler questions.  On the other hand if $\m L$ and $\m U$ are of prime power order for the same prime, then $\m A$ is a supernilpotent algebra  and we can solve \ceqv{} using the algorithm shown by Aichinger and Mudrinski in \cite{aichinger-mudrinski}. Finally, our algorithm, shown in the proof of the next theorem, solves the problem reducing it to cases mentioned above.

\begin{thm}
Let $\m A$ be finite $2$-nilpotent algebra from a congruence modular variety. Then \ceqv{A} is in \ptime. 
\end{thm}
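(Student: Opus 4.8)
The plan is to exhibit a polynomial-time algorithm; its correctness rests on the structure theory of the preceding sections, so we may assume $\m A=\m L\otimes^{\widehat F}\m U$ with $\m U,\m L$ affine and $\m U$ having underlying group $\mathbb{Z}_{p_1^{k_1}}\times\dots\times\mathbb{Z}_{p_m^{k_m}}$; the numbers $m$, $p_i^{k_i}$, $|L|$ and the arities of the basic operations are all constants. Given a circuit on inputs $(l_1,u_1),\dots,(l_n,u_n)$ with output gates $g_1,g_2$, I would first traverse it once and compute, for $g_1$ and $g_2$, the data describing the polynomial it computes: the affine $\m U$-part $\Usum_i\alpha_iu_i\Uplus u_0$, the scalars $\lambda_i$ in front of the $l_i$, and the function $\widehat{\po p}\colon U^n\to L$. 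The first two are polynomially sized and compose through basic operations by module arithmetic. The function $\widehat{\po p}$ is stored not as a table but as a formal $\m L$-linear combination of \emph{atoms}, an atom being a map $\overline u\mapsto\widehat f(B_1\overline u\Uplus c_1,\dots,B_k\overline u\Uplus c_k)$ for a basic $f$ and affine maps $B_j\overline u\Uplus c_j\colon U^n\to U$, together with the constant maps; a gate $g=f(h_1,\dots,h_k)$ contributes boundedly many new atoms (one built from $\widehat f$ and the affine $\m U$-parts of $h_1,\dots,h_k$, plus possibly a constant) on top of an $\m L$-linear recombination of the atoms of the $h_i$. Hence each $\widehat{\po p}_{g_i}$ ends up as a combination of polynomially many atoms, computed in polynomial time.

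The $\m L$-output of $g_i$ is $\Lsum_j\lambda^i_jl_j\Lplus\widehat{\po p}_{g_i}(\overline u)$, so $g_1\equiv g_2$ iff their affine $\m U$-parts coincide (a linear system over $\m U$, decided by Gaussian elimination), $\lambda^1_j=\lambda^2_j$ for all $j$ (finitely many comparisons of endomorphisms of $\m L$), and the polynomially sized sum of atoms $h:=\widehat{\po p}_{g_1}-\widehat{\po p}_{g_2}\colon U^n\to L$ is the zero function. Splitting $L$ into its $q$-primary summands $L=\prod_q L_q$, and noting that the atom recombination happens inside the module $\m L$, we get $h\equiv 0$ iff $\mathrm{pr}_{L_q}\circ h\equiv 0$ for every prime $q\mid|L|$, and $\mathrm{pr}_{L_q}\circ h$ is again a polynomially sized sum of atoms, now valued in $L_q$. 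So it remains to decide, for a fixed prime $q$, whether a polynomially sized sum of atoms $U^n\to L_q$ is identically $0$; I distinguish the cases $q\nmid|U|$ and $q\mid|U|$.

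If $q\nmid|U|$ then $\gcd(|L_q|,|U|)=1$, and Lemma~\ref{normal-lemma} applied to each $\mathrm{pr}_{L_q}\circ\widehat f\colon U^k\to L_q$ writes it as a sum of boundedly many terms $\mu\,\po w^{l}(\gamma\odot\overline w\Uplus c)$; substituting the affine maps of an atom turns each into $\mu\,\po w^{l}(\beta\odot\overline u\Uplus c')$, so after collecting terms $\mathrm{pr}_{L_q}\circ h$ becomes a polynomially sized expression of the form~(\ref{U-L-expr}). Eliminating degenerate $\beta$'s and constants as in the passage to~(\ref{U-L-normalized}) normalizes it; being a difference of such expressions, it is identically $0$ iff it is constant, which is then confirmed by evaluating at $\overline 0$. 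Constancy is decided by a recursion driven by Lemma~\ref{most-important}: to test independence of the variable $x^{(i)}$ one passes, for each representative $\beta$ of a class of $\leq^{(i)}(M^{(i)}_{1})$ occurring in the expression, to the aggregated sub-expression~(\ref{eq-most-important}), in which the surviving $\alpha$'s have their $i$-th coordinate tied modulo $p_i$ to that of $\beta$, so the sub-instance is strictly smaller in the parameter $\sum_jk_j$; as $\m A$ is fixed this parameter is bounded, the recursion has bounded depth, and since only polynomially many $\beta$'s (hence classes) ever occur the test runs in polynomial time. The delicate point — and the step I expect to be the main obstacle — is exactly that Lemma~\ref{most-important} must be strengthened from the stated necessary condition to a genuine characterization of constancy by these polynomially many smaller conditions, and that the attendant recursion provably terminates with polynomial branching; this is where finiteness of $\m A$, hence the bound on $\sum_jk_j$, is essential.

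If $q\mid|U|$, write $\m U=\m U_q\times\m U'$ with $\m U_q$ the $q$-primary part and $|U'|$ coprime to $q$. Regarding $\mathrm{pr}_{L_q}\circ h$ as a function of the $\m U'$-arguments with the $\m U_q$-arguments as parameters, Lemma~\ref{normal-lemma} (applicable since $\gcd(|U'|,|L_q|)=1$) rewrites it as $\Lsum_{\beta,c,l}\mu^{l}_{\beta,c}\,\po w^{l}(\beta\odot\overline{u'}\Uplus c)$, where the coefficients $\mu^{l}_{\beta,c}\colon(U_q)^n\to L_q$ are polynomially many and computable in polynomial time, and where — using the specific representation furnished by the proof of Lemma~\ref{normal-lemma} — $\mathrm{pr}_{L_q}\circ h\equiv 0$ iff every $\mu^{l}_{\beta,c}$ vanishes identically. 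Since every scalar of $\m U$ preserves $U_q$ and $U'$, in every atom the $\m U'$-components of the arguments of $\widehat f$ collapse to constants once the $\m U'$-arguments are frozen; consequently each $\mu^{l}_{\beta,c}$ is a polynomially sized polynomial of a fixed algebra built from $\m L_q$, $\m U_q$ and finitely many such frozen $\widehat f$'s, and this algebra — having both its ``$L$'' and its ``$U$'' of $q$-power order — is supernilpotent. Its circuit-equivalence problem is in \ptime by the algorithm of Aichinger and Mudrinski \cite{aichinger-mudrinski}, so the vanishing of all the $\mu^{l}_{\beta,c}$, and hence the whole test, is decided in polynomial time. Combining the two cases gives a polynomial-time algorithm for $\ceqv{A}$.
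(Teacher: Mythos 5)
Your overall architecture --- computing the affine parts and a polynomially sized sum of atoms gate by gate, splitting $L$ into its primary summands, and handling the coprime case by the $\leq^{(i)}(M^{(i)}_1)$-class recursion driven by Lemma~\ref{most-important} --- matches the paper's proof, and the ``main obstacle'' you flag in the coprime case is not actually one: no strengthening of Lemma~\ref{most-important} is needed, since the converse direction is trivial (the whole expression is the sum of the class-expressions \eqref{eq-most-important} over a partition, so if each class-expression is independent of $x^{(i)}$ then so is the whole), and the subsequent substitution of $\langle\beta^{(i)},u^{(i)}\rangle=c$ versus $=d$ is an exact reformulation of independence by construction. The genuine gap is in your case $q\mid|U|$. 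You claim that, after rewriting $\mathrm{pr}_{L_q}\circ h$ in the normal form of Lemma~\ref{normal-lemma} with respect to the $U'$-arguments, $h\equiv 0$ if and only if every coefficient function $\mu^{l}_{\beta,c}\colon (U_q)^n\to L_q$ vanishes identically. This fails because the functions $\po w^{l}(\beta\odot\overline{u'}\Uplus c)$ are not linearly independent over $L_q$ (for instance $\Lsum_{c\in U'}\po w^{l}(\beta\odot\overline{u'}\Uplus c)$ is the same constant for every nondegenerate $\beta$, so the zero function has many representations with nonzero coefficients), and the representation obtained by composing basic operations gate by gate gives no control over which representation of zero one lands on. Indeed, if vanishing of coefficients characterized vanishing of the function, the entire machinery of Lemma~\ref{most-important} would be superfluous in the coprime case too: deciding whether an expression \eqref{U-L-normalized} with nonzero coefficients is constant is exactly the hard core of the problem. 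The only representation in which the coefficients do determine the function is the interpolation $\Lsum_{\overline u}s_{f(\overline u)}(\overline x\Uminus\overline u)$ at the start of the proof of Lemma~\ref{normal-lemma}, which has $|U'|^n$ terms and is algorithmically useless.

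The paper closes this case by a different device, which you will need. The algebra with universe $L_q\times U_{q}$ is supernilpotent, and by Aichinger--Mudrinski \cite{aichinger-mudrinski} constancy of an $n$-ary polynomial over a fixed supernilpotent algebra is witnessed on an explicit set $S$ of $\mathcal{O}(n^C)$ tuples. So one substitutes each $s\in S$ for the $U_q$-arguments of $\widehat{\po p}$, runs the coprime-case algorithm on the resulting function of the $U'$-arguments, and accepts iff all these restrictions are constant with one common value; this is correct because for each fixed choice of the $U'$-arguments the induced map on the $U_q$-arguments is a polynomial of the supernilpotent algebra, hence constant as soon as it is constant on $S$. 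This replaces your coefficient-vanishing criterion with polynomially many calls to the coprime algorithm and repairs the argument.
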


\begin{proof}
Let us consider an input of $\ceqv{A}$, so two circuits that representing two polynomial operations $f,g \in \Pol{\m A}$. Since $\m A$ is a nilpotent algebra from a congruence modular variety it has a Maltsev term $d$ such that for all $a,b \in A$ the function $h(x)=d(x,a,b)$ is a permutation on $A$ (see Lemma 7.3 in \cite{fm}). Hence, to check if $f$ and $g$ describe the same function it is enough to check if the identity $d(f(\overline{x}),g(\overline{x}),0)=0$ holds (for some $0$ of $A$). Therefore, we can assume that our problem is to check if a given circuit with one output gate expresses a function constantly equal $0$. By \cite{fm} we know that there exist modules $\m U$ and $\m L$ such that $\m A=\m L \otimes^F \m U$ and as a consequence all operations over $\m A$ can be expressed in the following form:
\[
\po p^{\m L \otimes^{\widehat{F}}\m U}((l_1,u_1),\ldots,(l_k,u_k))
=
\left(\Lsum_{i=1}^k \lambda_i l_i \Lplus \widehat{\po p}(u_1,\ldots,u_k),
\Usum_{i=1}^k \alpha_i u_i+u_0 \right),
\]
where $\widehat{\po p}$ is a sum of elements
\[
\mu \widehat{f}(\beta\odot\overline{u})
\]
with $f$ being a basic operation of $\m A$, $\beta\in U^n$ and $\mu\in\m R_L$. Note that for algebras with finite signature we can obtain such a form in polynomial time: It is enough to compute these forms in preprocessing for the basic operations of the algebra and then to compute the final form step by step by composing basic operations.

If some $\lambda_i \neq 0$, $\alpha_i \neq 0$ or $u_0 \neq 0$, then $\po p$ obviously does not define a function that is constantly equal to $0$ and so our algorithm will return no. Otherwise it left to check if
\[
\widehat{\po p}(u_1,\ldots,u_k) \equiv 0.
\] 
If the evaluation $\widehat{\po p}(0, \ldots, 0)$ gives a non-zero value then clearly this does not hold. Otherwise, all we need to check is if $\widehat{\po p}(u_1,\ldots,u_k)$ is a constant function.

\begin{claim}{1}\label{claim-coprime}
If $|U|$ and $|L|$ are co-prime, then there exists a polynomial time algorithm to check whether $\widehat{\po p}(u_1,\ldots,u_k)$ is constant.
\end{claim}

We will show that Claim \ref{claim-coprime} holds at the end of the proof. For now assume that it holds. We will then show that it also holds for arbitrary $U$ and $L$. Without loss of generality we can assume that $L=L_1\times\ldots\times L_b$ such that for every $i$ the set $L_i$ is of prime power order and $|L_i|$ and $|L_j|$ are co-prime for $i\not=j$. For checking that $\widehat{\po p}$ is a constant function it is enough to check that its projection on each $L_i$ is constant. Since $\m L$ is affine, it is easy to see that also the algebra decomposes as $\m L=\m L_1\times\ldots\times\m L_b$. 
Hence, the projection of $\widehat{\po p}$ (formally $\widehat{\po p}^{\m L \otimes^{\widehat{F}} \m U}$) on the $i$-th coordinate  is equal to $\widehat{\po p}^{\m L_i \otimes^{\widehat{G}} \m U}$, where $\widehat{G}$ contains projections of operations from $\widehat{F}$ on the $i$-th coordinate. In such a way we can reduce our problem to the case in which $L$ is of prime power order. Let $|L|$ be power of some prime $q$.

Now, again, without loss of generality we can assume that $U=U_1\times U_2$ such that $|U_1|$ is power of $q$ and $q\nmid|U_2|$. 
Using the assumption that $U=U_1\times U_2$ we can divide every argument of $\widehat{\po p}$ into two independent arguments one from $U_1$ and one from $U_2$. Note that if we fix a constant in the arguments from $U_1$ we obtain an operation from $U_2$ to $L$, stemming from a $2$-nilpotent algebra with universe $L \times U_2$. Symmetrically, if we put constants in place of arguments from $U_2$ we obtain a polynomial operation over a $2$-nilpotent algebra with universe $L \times U_1$. Note that the second algebra is a nilpotent algebra of prime power order with finite signature and thus supernilpotent (see \cite{kearnes-spectrum}). \ceqv{} for such algebras can be solved in polynomial time using the algorithm proposed by Aichinger and Mudrinski \cite{aichinger-mudrinski}. In this algorithm to check if a given $n$-ary polynomial operation is constant we need only to check if is constant on a certain set $S$ of $\mathcal O(n^C)$ many tuples, where $C$ is a constant that depends on the algebra.

This enables us to use the following algorithm to solve \ceqv{}. Let $\widehat{\po p}$ be an $n$-ary function. For every evaluation $s$ from $S$, put values from $s$ into arguments from $U_1$ and check if the obtained polynomial over the $2$-nilpotent algebra with universe $L \times U_2$ is constant using the algorithm given by Claim \ref{claim-coprime}. If for every $s\in S$ the obtained polynomial is constant and equal the same constant, then $\widehat{\po p}$ is constant. Otherwise, $\widehat{\po p}$ is not constant. This algorithm obviously solve our problem in polynomial time. Thus all that is left is to give a proof of Claim \ref{claim-coprime}.

If $|U|$ and $|L|$ are co-prime then using the result from the previous two sections we can express $\widehat{\po p}$ as in (\ref{U-L-normalized}):

\begin{equation}\label{eq-U-L-form}
\widehat{\po p}(u_1,\ldots,u_k) =\Lsum_{l=1,\ldots,s\atop \beta\in (U^n)^*}\mu^{l}_{\beta,l} \po m_{l}(\beta \odot \o u) .
\end{equation}

It is not hard to see that we can obtain such a form of $\widehat{\po p}$ in polynomial time step by step composing basic operations occurring in $\po p$. Note that $\widehat{\po p }$ is a constant function if and only if for every $i$ it does not depend on $u^{(i)}$. Hence, from now we will be looking for an algorithm determining if $\widehat{\po p}$ depends on $i$-th coordinate. 

By Lemma \ref{most-important} if our function does not depend on variables $u^{(i)}$ then also for every $\beta \in (U^n)^*$ the following subterm does not depend on $u^{(i)}$:

$$\po t(\overline{u}) = \Lsum_{\substack{j=1\ldots s\\ \alpha \in \leq^{(i)}(M^{(i)}_1, \beta)}} \mu_{\alpha}^{(j)} \po{m}_j(\alpha\odot \overline{u})$$
As $\leq^{(i)}(M^{(i)}_1)$ is an equivalence relation, we can partition the set of nondegenerate $\beta$'s into classes of $\leq^{(i)}(M^{(i)}_1)$ and check if the corresponding expressions $\po t(\overline{u})$ do not depend on $u^{(1)}$. So our algorithm checks if a term $\po t$ for two evaluations $\overline{u},\overline{v}\in U^n$ that differ only on the $i$-th coordinate gives the same value. It considers cases, when $\langle \beta^{(i)}, u^{(i)} \rangle = c$ and $\langle \beta^{(i)}, v^{(i)} \rangle = d$ for all $c,d \in U$. The fact that we sum $\alpha$'s in one $\leq^{(i)}(M^{(i)}_1)$ class will lead us to recursive calls to problems with much simpler terms.

As $\beta^{(i)}$ is nondegenerate, there exist an index $k$ with invertible $\beta^{(i)}_k$. Therefore the equation $\langle \beta^{(i)}, u^{(i)}\rangle  = c$ is equivalent to $u^{(i)}_k =(\beta_k^{(i)})^{-1}(c-\sum_{l\not=k}\beta^{(i)}_l u^{(i)}_l)$. Hence evaluations with $\langle \beta^{(i)}, u^{(i)}\rangle  = c$ are of the form

\begin{equation}
\label{subst}
\Big(u^{(1)}_1,\ldots, u^{(i)}_{k-1}, (\beta_k^{(i)})^{-1}(c-\sum_{l\not=k}\beta^{(i)}_l u^{(i)}_l), u^{(i)}_{k+1}, \ldots, u^{(m)}_n\Big)
\end{equation}

So we can define $\po t_c(\o u)$ as function created from $\po t(\o u)$ by eliminating variable $u^{(i)}_k$ according to \eqref{subst}. Now take $c,d\in U$ and $\overline{u}, \overline{v}$ non-equal only on $i$-th coordinate with $\langle \beta^{(i)}, u^{(i)}\rangle  = c$ and $\langle \beta^{(i)}, v^{(i)}\rangle  = d$.  As $\overline{u},\overline{v}$ are equal on all coordinates different than $i$-th, we identify variables $u^{(l)} = v^{(l)}$ for $l\neq i$ in $\po t_c(\o u)-\po t_c(\o v)$ and get a function $\po w_{c,d}(\o u,\o v)$. Now we recursively check if $\po w_{c,d}(\o u,\o v)\equiv 0$ for all $\o u,\o v$.   It's obvious that the statement that for all choices of $c,d \in U$ equation $\po w_{c,d}(\o u,\o v) \equiv 0$ holds is equivalent to statement that $\po t(\overline{u})$ does not depend on $u^{(i)}$. Now notice, that we can reduce checking $\po w_{c,d}(\o u,\o v)\equiv 0$ to checking if $\po w_{c,d}(\o u,\o v)$ is constant by checking if for one evaluation its 0.

In such a way we will reduce our question to a constant number of easier questions, as there is only a constant number of pairs $c,d$. Note that if $\alpha\in \dep^{(i)}(\beta,M)$ then there exist $\nu_{\alpha,\beta,i}\in \Z_{p_i^{k^i}}$ and $m_{\alpha,\beta,i}$ such that  $\langle \alpha^{(i)}, u^{(i)}\rangle  = \nu_{\alpha,\beta,i} \langle \beta^{(i)}, u^{(i)}\rangle  + \langle m_{\alpha,\beta,i},u^{(i)}\rangle $ and $\langle m_{\alpha,\beta,i}, u^{(i)}\rangle \in M$. Hence, we obtain that 

\[
\po t(\overline{u}) = \Lsum_{\substack{j=1\ldots s\\ \alpha \in \leq^{(i)}(M^{(i)}_1, \beta)}} \mu_{\alpha}^{(j)} \po{m}_j(\langle \alpha^{(1)}, u^{(1)}\rangle , \ldots, \nu_{\alpha,\beta,i} \langle \beta^{(i)}, u^{(i)}\rangle  + \langle m_{\alpha,\beta,i}, u^{(i)}\rangle , \ldots, \langle \alpha^{(m)}, u^{(m)}\rangle )
\] 
and if we assume that $\langle \beta^{(i)}, u^{(i)}\rangle  = c$ and consequently substitute $u^{(i)}_k$ according to \eqref{subst} then 

\begin{equation}\label{eq-alg}
 \po t_c (\o u)= \Lsum_{\substack{j=1\ldots s\\ \alpha \in \leq^{(i)}(M^{(i)}_1, \beta)}} \mu_{\alpha}^{(j)} \po{m}_j(\langle \alpha^{(1)}, u^{(1)}\rangle , \ldots, \nu_{\alpha,\beta,i} c + \langle m'_{\alpha,\beta,i}, u^{(i)}\rangle , \ldots, \langle \alpha^{(m)}, u^{(m)}\rangle )
\end{equation}
where  $\langle m'_{\alpha,\beta,i}, u^{(i)}\rangle \in M^{(i)}_1$.  Now observe, that since $\langle m'_{\alpha,\beta,i}, u^{(i)}\rangle \in M^{(i)}_1$ we have that $\langle m'_{\alpha,\beta,i}, u^{(i)}\rangle =p_i\cdot(\langle m''_{\alpha,\beta,i}, u^{(i)}\rangle )=\langle m''_{\alpha,\beta,i},(p_i\cdot u^{(i)})\rangle $. Since $p_i\Z_{p_i^{k_i}}$ is isomorphic to $Z_{p_i^{k_i -1}}$, we obtain, in fact, that the expression $\po t_c(\o u)-\po t_d(\o v)$ is over some smaller domain (as we can apply reasoning to both $\po t(\o u)$ and $\po t(\o v)$). So $\po w_{c,d}(\o u,\o v)$ can be regarded as an expression over smaller domain. We then can continue recursively and apply Lemma \ref{most-important} to $\po w_{c,d}(\o u,\o v)$ ( with the remark that $\po w_{c,d}(\o u,\o v)$ might not be in the form required, but can be easily turned into such an expression, by eliminating all degenerated expressions as discussed in the last section).

This consideration gives us a recursive algorithm for determining if a given function in form \eqref{eq-U-L-form} is constant. Observe that if for all $i$ we have $k_i=0$, then $U$ is one element set so we obtain constant expressions and all we need is to compare their values.  Note that in every recursive call we reduce the problem to solving linearly many simpler cases. Simpler means for us that this new functions have descriptions which are not longer than twice the original one and are over smaller domain. Since the depth of the recursion is bounded by a constant it means that our algorithm works in polynomial time. This observation completes the proof of the claim and in a consequence the proof of the theorem.
\end{proof}

\end{document}